\newcounter{lemmacounter}
\newtheorem{lemma}[lemmacounter]{Lemma}
\newcounter{theocounter}
\newtheorem{theorem}[theocounter]{Theorem}
\begin{document}

\title{R\'enyi entropy of the infinite well potential in momentum space and Dirichlet-like trigonometric functionals}

\author{A.I. Aptekarev}
\email{aptekaa@keldysh.ru}
\affiliation{Keldysh Institute for Applied Mathematics, Russian Academy of Sciences and Moscow State University, Moscow, Russia}

\author{J.S. Dehesa}
\email{dehesa@ugr.es}
\affiliation{Institute ``Carlos I'' for Computational and Theoretical Physics, University of Granada, Granada, Spain}
\affiliation{Department of Atomic, Molecular and Nuclear Physics, University of Granada, Granada, Spain}

\author{P. S\'anchez-Moreno}
\email{pablos@ugr.es}
\affiliation{Institute ``Carlos I'' for Computational and Theoretical Physics, University of Granada, Granada, Spain}
\affiliation{Department of Applied Mathematics, University of Granada, Granada, Spain}

\author{D.N. Tulyakov}
\email{dnt@mail.nnov.ru}
\affiliation{Keldysh Institute for Applied Mathematics, Russian Academy of Sciences and Moscow State University, Moscow, Russia}

\begin{abstract}
The momentum entropic moments and R\'enyi entropies of a
one-dimensional particle in an infinite well potential are found by means
of explicit calculations of some Dirichlet-like trigonometric integrals. The
associated spreading lengths and quantum uncertainty-like sums are also
provided.
\end{abstract}

%

\keywords{Information-theoretic measures, quantum infinite well, R\'enyi entropy, R\'enyi spreading length}

\maketitle

\section{Introduction}

A major goal of the information theory of quantum systems is the
determination of the uncertainty fundamental quantities  beyond the
well-known standard deviation (or its square, the variance) in both
position and momentum spaces \cite{ohya_04,uffink_thesis,frieden_04,hall_pra99}. 
They include the spreading measures of global (entropic moments,  the
R\'enyi, Shannon and Tsallis' entropies and lengths) and local (Fisher's
information) character \cite{renyi_70,shannon:bst48,tsallis_jsp88,frieden_04}. These information-theoretic measures quantify the
spread of the position and momentum one-particle densities of the system
(which are the basic variables according to the Density Functional Theory
\cite{parr_89}) in various complementary ways and, in contrast to the variance,
without reference to any specific point of the corresponding Hilbert space.
Moreover, they are closely related to various energetic and experimentally measurable quantities of the system \cite{frieden_04,parr_89,dehesa_ijqc10,dehesa:pra94,bialynicki_pra06}.

However, they have not yet been or cannot be analytically calculated; not
even for the one-dimensional systems with a Coulomb, oscillator or
infinite well
potential despite they are very often used to explain and predict
numerous physico-mathematical phenomena in science and technology. The
present knowledge of the information-theoretic measures of the hydrogenic,
oscillator-like and particle-in-a-box systems is described in Refs.
\cite{dehesa_ijqc10}, \cite{dehesa:jmp98,dehesa:pra94}
and \cite{lopezrosa_jmc11}, respectively. Therein we realize that the entropic moments
and the R\'enyi, Shannon and Tsallis entropies are not yet known
except for the lowest- and highest-lying states, mainly because they are
power (entropic moments, R\'enyi and Tsallis entropies) or logarithmic
(Shannon entropy) functionals of the corresponding densities. The
calculation of the Fisher information is somewhat easier because of its
close connection with the kinetic energy.

In this paper we will center around the particle-in-a-box system, which is
composed by a particle moving in the infinite well potential $V(x) = 0$
for $|x| < a$ and $\infty$ otherwise. This canonical system has been used
to simplify the description of numerous scientific phenomena in nuclei
\cite{bohr_98}, polymers \cite{pederson:prb00} and various nanosystems
\cite{rubio:prl99,harrison_05} as well as in chaos \cite{hu:prl99}, among
others.
The quantum-mechanical states of this system
are characterized \cite{majernik:jpa99,majernik:jpa97,lopezrosa_jmc11} by the energies
\[
E_n=\frac{\pi^2}{8a^2}n^2; \; n=1,2,3,\ldots,
\]
and the densities
\[
\rho_n(x)=\left\{
\begin{array}{ll}
  \displaystyle \frac{1}{a} \sin^2\left[\frac{\pi n}{2a}(x-a) \right]; & |x|\le a\\[3mm]
  0; & |x|>a
\end{array}
\right.
\]
and
\begin{equation}
  \gamma_n(p)=
  \frac{\pi n^2}{2}\frac{\sin^2\left[ap -\frac{\pi n}{2} \right]}{\left(
  a^2 p^2-\frac{\pi^2 n^2}{4} \right)^2}; \; p\in(-\infty,+\infty),
  \label{eq:density_momentum}
\end{equation}
in position and momentum spaces, respectively. Recently, the power moments
$\langle x^k\rangle$
and the entropic moments $W_k[\rho_n] = \langle \rho_n^{k-1}\rangle$ 
in position space have been
determined \cite{sanchezruiz:pla97,sanchezruiz:jpa99,lopezrosa_jmc11} 
what has allowed us to find the values for the
position variance, R\'enyi, Shannon and Tsallis entropies; in addition the
position Fisher information $F[\rho_n]$ is also known
\cite{lopezrosa_jmc11}.  
In momentum space,
however, the situation is very different; while the second-order
moment $\langle p^2\rangle$
and the Fisher information $F[\gamma_n]$ have known values
\cite{lopezrosa_jmc11}, the
entropic moments and consequently the R\'enyi and Tsallis entropies have not
yet been able to be calculated, mainly because all these three quantities
depend \cite{lopezrosa_jmc11} on the Dirichlet-like trigonometric integrals
\cite{dym_72}
\begin{equation}
  I_{n,k}=\int_{-\infty}^{+\infty}
  \left[ \frac{\sin^2\left(t-\frac{\pi n}{2} \right)}{\left( t^2
  -\frac{\pi^2 n^2}{4} \right)^2}
  \right]^k dt,
  \label{eq:integral_ink}
\end{equation}
(compare with (\ref{eq:density_momentum})).

This functional has been recently computed only in the two following cases
\cite{lopezrosa_jmc11}: (a)
$n$ fixed and the first few values of $k$, and (b) $k$ fixed and very large
$n$. A
main purpose of this paper is to calculate this functional for the generic
pair $(n,k)$, and then to find the values of all entropic moments, the R\'enyi
entropies and lengths, the Tsallis entropies and the associated uncertainty
relations in momentum space.
This opens the way to calculate various complexity measures not only in position space but also in momentum space. This is the case of the L\'opez-Ruiz-Mancini-Calbet (LMC, in short) \cite{catalan:pre02} and the Fisher-R\'enyi \cite{romera_pla08} complexities, where the R\'enyi entropy plays an important role; in particular, the LMC complexity of the particle-in-a-box  has been numerically studied in \cite{lopezruiz:osid09,nagy:pla09} and mathematically considered in \cite{lopezrosa_jmc11}. It has been recently reviewed the high relevance of these statistical complexity measures in analysing a great diversity of physical phenomena related with the internal disorder of the intrinsic structure of many-electron systems \cite{sen_11}.

The paper is structured as follows. Firstly, in Section
\ref{sec:measures} the
information-theoretic measures of a one-dimensional probability
distribution needed in this work are briefly discussed. Then, in Section
\ref{sec:trigonometric_integral},
the calculation of the trigonometric integral $I_{n,k}$  is explicitly carried
out. The resulting expression is used in Section \ref{sec:calculations} to find the values of
the entropic moments, R\'enyi's measures and Tsallis'
entropies in momentum space, as well as the associated uncertainty-like
expressions
which
combines
them with the corresponding quantities in position space with integer orders. Finally, the conclusions and some open problems are given.

\section{Information-theoretic measures of a probability density: Basics}
\label{sec:measures}

The uncertainty measures of a random variable $X$ are given by the
spreading measures of the corresponding probability density $\rho(x)$,
$x\in(-\infty,+\infty)$. These measures are defined either in terms of the
moments-around-the-origin (or simply, power moments) $\langle x^k\rangle$
(such as, e.g., the variance $V[\rho]=\langle x^2\rangle-\langle
x\rangle^2$) or in terms of the frequency or entropic moments $W_k[\rho]$ of
$\rho(x)$, which are defined by
\[
\langle x^k\rangle=\int_{-\infty}^{+\infty} x^k \rho(x) dx,
\]
and
\begin{equation}
W_k[\rho]=\left\langle \rho^{k-1}\right\rangle=
\int_{-\infty}^{+\infty} \left[ \rho(x) \right]^k dx,
\label{eq:entropic_moment_definition}
\end{equation}
respectively. From the latter quantities various information-theoretic
measures have been defined, such as the R\'enyi entropy \cite{renyi_70}
\begin{equation}
R_\alpha[\rho]=\frac{1}{1-\alpha}\ln W_\alpha[\rho]
=\frac{1}{1-\alpha}\ln\left\langle \rho^{\alpha-1}\right \rangle,\,\alpha>0,
\label{eq:renyi_entropy_definition}
\end{equation}
and the Tsallis entropy \cite{tsallis_jsp88}
\begin{equation}
  T_\alpha[\rho]=\frac{1}{\alpha-1}\left( 1-W_\alpha[\rho] \right)
  =\frac{1}{\alpha-1}\left( 1-\left\langle \rho^{\alpha-1}\right\rangle
  \right),\,\alpha>0.
  \label{eq:tsallis_entropy_definition}
\end{equation}
The limiting case $\alpha\to 1$ of these two quantities is the celebrated
Shannon entropy \cite{shannon:bst48}
\[
  S[\rho]=-\int_{-\infty}^{+\infty} \rho(x)\ln\rho(x) dx.
\]

In contrast to these three entropy quantities which measure in various (but
complementary) ways the total extent in which the random variable is
distributed, there exists a qualitatively different spreading quantity: the
Fisher information with respect to the location parameter (or simply Fisher information, heretoforth), defined \cite{frieden_04} by
\[
  F[\rho]=\left\langle \left[ \frac{d}{dx}\ln\rho(x)
  \right]^2\right\rangle=
  \int_{-\infty}^{+\infty} \frac{\left[ \rho'(x) \right]^2}{\rho(x)}dx.
\]
Indeed, contrary to the variance and the R\'enyi, Shannon and Tsallis
entropies, the Fisher information has a locality property since its value
mainly comes from the regions where the density is more oscillatory (i.e. when the
density has more nodes per unit argument of $x$). In other terms, the
Fisher information is very sensitive to the fluctuations of $\rho(x)$.
Then, it provides an estimation of the oscillatory character of the
density while the R\'enyi, Shannon and Tsallis entropies measure in different ways the total extent in which the density is distributed.

The five spreading measures previously defined, although interesting
\textit{per se}, 
cannot be mutually compared because either they are  dimensionless or they do not have the same units. To
overcome this problem, following Hall \cite{hall_pra99}, we will use instead the
so-called ``direct spreading measures'' to analyze the uncertainty of the
random variable; namely, the standard deviation, $\Delta x=\left( V[\rho]
\right)^\frac12$, and the information-theoretic lengths of R\'enyi, Shannon (also called entropy power)
and Fisher defined by
\[
  L_\alpha^R[\rho]\equiv\exp\left( R_\alpha[\rho] \right),
\]
\[
  L^S[\rho]\equiv \exp\left( S[\rho] \right),
\]
and
\[
  \delta x\equiv L^F[\rho]\equiv \frac{1}{\sqrt{F[\rho]}},
\]
respectively. All these quantities have the same units as the random
variable, and they satisfy the three following properties: translation and
reflection invariance, linear scaling and vanishing as the density approach
to the Dirac delta. Moreover, all of them fulfil an uncertainty relation.

\section{Calculation of the Dirichlet-like trigonometric functionals}
\label{sec:trigonometric_integral}

In this Section we give and prove Theorem \ref{th:mainTheorem}, from which follows the main result of the paper.

\begin{theorem}
  The integral $I_{n,k}$, defined in (\ref{eq:integral_ink}), has the value
  \[
 I_{n,k}=(\pi n)^{-2k} \sum_{j=0}^{k-1} 
  \left(
  \begin{array}{c}
  2j+2k-1\\
  2k-1
  \end{array}
  \right)
  \frac{2\pi (\pi n)^{-2j} \left( -\frac14 \right)^j}{(2k-2j-1)!}
  \sum_{i=0}^{k-1} (-1)^i 
    \left(
    \begin{array}{c}
    2k\\
    i
    \end{array}
    \right)
  (k-i)^{2k-2j-1},
  \]
  for $n=1,2\ldots$, and $k=1,2,\ldots$
  \label{th:mainTheorem}
\end{theorem}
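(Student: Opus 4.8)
The plan is to read $I_{n,k}$ as the integral over the real line of the $2k$-th power of the function
\[
\phi(t):=\frac{\sin(t-c)}{t^2-c^2},\qquad c:=\frac{\pi n}{2}.
\]
The decisive observation is that $\phi$ is \emph{entire}: since $2c=\pi n\in\pi\mathbb{Z}$, the factor $\sin(t-c)$ vanishes simply at both $t=c$ and $t=-c$, so $\sin^{2k}(t-c)$ has zeros of order $2k$ at these points which exactly cancel the order-$2k$ poles of $(t^2-c^2)^{-2k}$; moreover, $|\sin(t-c)|$ being bounded on every horizontal strip, $\phi^{2k}(t)=O(|t|^{-4k})$ uniformly there. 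Hence I may shift the path of integration down to the line $\Im t=-\varepsilon$ without crossing any singularity, so that $I_{n,k}=\int_{\Im t=-\varepsilon}\phi(t)^{2k}\,dt$.

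On that line I expand the numerator into exponentials,
\[
\sin^{2k}(t-c)=\frac{1}{4^{k}}\sum_{m=-k}^{k}(-1)^{m}\binom{2k}{k-m}\,e^{2im(t-c)},
\]
and split the integral into the $2k+1$ absolutely convergent pieces $J_m:=\int_{\Im t=-\varepsilon}e^{2imt}(t^2-c^2)^{-2k}\,dt$. Each $J_m$ is computed by residues, closing the contour in whichever half-plane $e^{2imt}$ decays; the large semicircle contributes nothing because $4k\ge4>1$. For $m\le0$ one closes below the line, enclosing no pole, so $J_m=0$; for $m\ge1$ one closes above, obtaining $J_m=2\pi i\,(\mathrm{Res}_{t=c}+\mathrm{Res}_{t=-c})\,[\,e^{2imt}(t^2-c^2)^{-2k}\,]$. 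Thus only the terms $m=1,\dots,k$ survive.

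The last step is to evaluate the two residues by Leibniz' rule — applied to $e^{2imt}$ times $(t+c)^{-2k}$ at $t=c$, and times $(t-c)^{-2k}$ at $t=-c$ — and to reassemble. Three simplifications drive the collapse to the stated form: (i) the phase $e^{-2imc}$ from the expansion kills the $e^{2imc}$ emerging from $\mathrm{Res}_{t=c}$, while $e^{-2imc}$ times the $e^{-2imc}$ from $\mathrm{Res}_{t=-c}$ equals $e^{-4imc}=1$ because $4c=2\pi n$; (ii) in $\mathrm{Res}_{t=c}+\mathrm{Res}_{t=-c}$ the contributions in which an even number $p$ of derivatives hits the exponential cancel, leaving only $p=2k-2j-1$ with $j=0,\dots,k-1$; (iii) the change of index $i=k-m$ turns the remaining $m$-sum into $\sum_{i=0}^{k-1}(-1)^{i}\binom{2k}{i}(k-i)^{2k-2j-1}$. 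The factorial and binomial factors then combine ($\binom{2k-1}{2j}$ from Leibniz, $1/(2k-1)!$ from the residue formula, and $(2j+2k-1)!/(2k-1)!$ from the $2j$-fold derivative of $(t\pm c)^{-2k}$) into $\binom{2j+2k-1}{2k-1}/(2k-2j-1)!$, while the powers of $2$ combine with the $1/4^{k}$ prefactor into $(-1/4)^{j}$ and $(2c)^{-2k-2j}=(\pi n)^{-2k-2j}$, giving precisely the claimed formula. I expect the real work to be concentrated in the bookkeeping of steps (ii)--(iii) — tracking the signs and the powers of $i$ and of $2$ — together with the (routine but necessary) justification of the contour shift and the vanishing of the arcs.
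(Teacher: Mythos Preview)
Your argument is correct and complete in outline; the bookkeeping you flag in (ii)--(iii) does work out exactly as you predict, and the contour shift is justified because $\phi^{2k}$ is entire and $O(|t|^{-4k})$ uniformly on any horizontal strip.

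Your route, however, is genuinely different from the paper's. The authors never pass to the complex plane. Instead they (1) expand $(t^2-c^2)^{-2k}$ in partial fractions about the two real poles, (2) use the $\pi$-periodicity of $\sin^2$ to see that the odd-index partial-fraction terms integrate to zero and that the even ones all reduce, after a shift, to the real integrals $K_j=\int_{\mathbb{R}}\sin^{2k}u\,u^{-2(k-j)}\,du$, and (3) evaluate $K_j$ by writing the integrand as a product of $\operatorname{sinc}$ factors and powers of $\sin$, computing the Fourier transform as an iterated convolution of Heaviside functions and Dirac deltas, and evaluating at $\omega=0$. Your residue calculation essentially fuses these three steps: the Leibniz expansion at $t=\pm c$ plays the role of the partial-fraction decomposition, the cancellation of even-$p$ terms in your step (ii) corresponds to the vanishing of the odd-$j$ terms in their Lemma~2, and your direct evaluation of the surviving residues replaces the Fourier/convolution machinery of their Lemma~3. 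The paper's route has the minor advantage of isolating the integrals $K_j$ (which are of some independent interest); yours is shorter and more self-contained, needing only the residue theorem and the key observation---not made explicit in the paper---that the integrand extends to an entire function precisely because $2c\in\pi\mathbb{Z}$.
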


In particular for $k=1$ and $k=2$, one easily has the values
\begin{equation}
I_{n,1}=\frac{2}{\pi n^2};\quad I_{n,2}=\frac{4}{3\pi^3n^4}\left(
1+\frac{15}{2\pi^2n^2}
\right),
\label{eq:in1_in2}
\end{equation}
in agreement with the corresponding values already obtained in
\cite{lopezrosa_jmc11}. For
completeness, let us also collect here that the asymptotical $(n\to\infty)$
values of $I_{n,k}$ are known \cite{lopezrosa_jmc11} to be
\begin{equation}
  I_{n,k}
  \simeq
  \frac{b_k}{\pi^{2k-1}n^{2k}},
  \label{eq:ink_asymptotics}
\end{equation}
where
\begin{equation}
  b_k=4k\sum_{i=0}^{k-1} \frac{(-1)^i (k-i)^{2k-1}}{i!(2k-i)!};\;k\ge 1.
  \label{eq:bk}
\end{equation}
It is worth pointing out that Eq. (\ref{eq:ink_asymptotics}) can be obtained from Theorem \ref{th:mainTheorem} simply by taking the first term of the sum (i.e., $j=0$).

To prove Theorem \ref{th:mainTheorem}, we will use three Lemmas. 
In Lemma \ref{lemma1} we expand the
reciprocal of the product $(t-\pi n/2)^{2k}(t+\pi n/2)^{2k}$
contained in the Dirichlet-like kernel \cite{dym_72} of
the integral (3) as a sum of simple fractions. Putting this result into
Eq.(\ref{eq:integral_ink}), Lemma \ref{lemma2} shows that the calculation of 
$I_{n,k}$  reduces to the evaluation
of the integral 
\begin{equation}
K_{n,j}=\int_{-\infty}^\infty \frac{\sin^{2k}u}{u^{2(k-j)}}du.
\label{eq:integral_knj}
\end{equation}
Then, in Lemma \ref{lemma3} this integral is explicitly calculated, and
consequently Theorem \ref{th:mainTheorem} follows in a straightforward
manner.

\begin{lemma} 
The following identity holds true:
  \begin{eqnarray}
   \frac{1}{\left( t-\frac{\pi n}{2} \right)^{2k}
  \left( t+\frac{\pi n}{2} \right)^{2k}}
  =(\pi n)^{-2k} \sum_{j=0}^{2k-1} (\pi n)^{-j}
    \left(
    \begin{array}{c}
    j+2k-1\\
    2k-1
    \end{array}
    \right)\nonumber\\
\times  \left[ \left( t+\frac{\pi n}{2} \right)^{j-2k}
  +(-1)^j \left( t-\frac{\pi n}{2} \right)^{j-2k}\right].
  \end{eqnarray}
  \label{lemma1}
\end{lemma}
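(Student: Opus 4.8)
The plan is to obtain the identity directly as the partial-fraction decomposition of the left-hand side. Put $\alpha=\pi n/2$, so the left-hand side is $f(t):=\bigl[(t-\alpha)^{2k}(t+\alpha)^{2k}\bigr]^{-1}$, a rational function with two poles of order $2k$, at $t=\pm\alpha$, that decays like $t^{-4k}$ at infinity. Hence $f$ has no polynomial part and its partial-fraction expansion reads $f(t)=\sum_{m=1}^{2k}A_m(t-\alpha)^{-m}+\sum_{m=1}^{2k}B_m(t+\alpha)^{-m}$. Since $f$ is even, the substitution $t\mapsto-t$ forces $B_m=(-1)^mA_m$, so it suffices to compute the $A_m$ (the principal part at $t=\alpha$).

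To get the $A_m$, I would expand the factor that is regular at $t=\alpha$, namely $(t+\alpha)^{-2k}=(2\alpha)^{-2k}\bigl(1+(t-\alpha)/(2\alpha)\bigr)^{-2k}$, via the binomial series $(1+x)^{-2k}=\sum_{j\ge0}(-1)^j\binom{2k+j-1}{j}x^j$. This gives, near $t=\alpha$,
\[
\frac{1}{(t-\alpha)^{2k}(t+\alpha)^{2k}}=(2\alpha)^{-2k}\sum_{j\ge0}(-1)^j\binom{2k+j-1}{j}(2\alpha)^{-j}\,(t-\alpha)^{j-2k},
\]
and the terms $j=0,\dots,2k-1$ are precisely the principal part at $t=\alpha$. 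Reading off coefficients and using $\binom{2k+j-1}{j}=\binom{2k+j-1}{2k-1}$, the term $(t-\alpha)^{j-2k}$ carries the coefficient $(-1)^j(2\alpha)^{-2k-j}\binom{2k+j-1}{2k-1}$.

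Finally I would assemble the two halves. The contribution at $t=\alpha$ is $\sum_{j=0}^{2k-1}(-1)^j(2\alpha)^{-2k-j}\binom{2k+j-1}{2k-1}(t-\alpha)^{j-2k}$; applying $B_m=(-1)^mA_m$ with $m=2k-j$ multiplies the corresponding $(t+\alpha)$-term by $(-1)^{2k-j}(-1)^j=(-1)^{2k}=1$, so the contribution at $t=-\alpha$ is $\sum_{j=0}^{2k-1}(2\alpha)^{-2k-j}\binom{2k+j-1}{2k-1}(t+\alpha)^{j-2k}$. Adding them and substituting $2\alpha=\pi n$, $\alpha=\pi n/2$ yields exactly the stated formula. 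The only point needing a little care is the bookkeeping of the two sign factors — the $(-1)^j$ from the binomial series and the $(-1)^m$ from evenness — but they combine transparently because $m+j=2k$; everything else is a routine series expansion. As an independent check one can clear denominators and verify the resulting polynomial identity, or compare with the $k=1,2$ cases recorded after the theorem; an alternative route would be induction on $k$, but the Taylor-expansion argument above is the most direct.
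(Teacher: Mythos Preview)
Your proposal is correct and follows essentially the same approach as the paper: both obtain the principal part at $t=\pi n/2$ by the binomial (Taylor) expansion of $(t+\pi n/2)^{-2k}$ and then argue that the sum of the two principal parts exhausts $f$ because it vanishes at infinity. The only cosmetic difference is that the paper expands separately at each pole and invokes Liouville's theorem for the remainder, whereas you expand once and use the evenness symmetry $B_m=(-1)^mA_m$ to obtain the second principal part---a slightly tidier bookkeeping shortcut, but not a genuinely different route.
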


\begin{proof}[Proof of Lemma \ref{lemma1}]
  We will expand $\frac{1}{\left( t-\frac{\pi n}{2} \right)^{2k}
  \left( t+\frac{\pi n}{2} \right)^{2k}}:=A(t)$ into a sum of simple fractions;
  i.e. each term of this sum has one pole in variable $t$ with certain multiplicity and the numerators (residues) of the terms do not depend on $t$.
  First, let us take
  \[
 u:=t-\frac{\pi n}{2} \Rightarrow
  A=\frac{(u+\pi n)^{-2k}}{u^{2k}}=(\pi n)^{-2k}
  \left[ \sum_{j=0}^{2k-1} u^{j-2k} (\pi n)^{-j}
    \left(
    \begin{array}{c}
    -2k\\
    j
    \end{array}
    \right)\right]+f(u),
  \]
  where $f(u)=O(1)$ when $u\to 0$. Now, let us use
  \[
  u:=t+\frac{\pi n}{2} \Rightarrow
  A=\frac{(u-\pi n)^{-2k}}{u^{2k}}=(\pi n)^{-2k}
  \left[ \sum_{j=0}^{2k-1} (-1)^j u^{j-2k} (\pi n)^{-j}
    \left(
    \begin{array}{c}
    -2k\\
    j
    \end{array}
    \right)
  \right]+\tilde{f}(u),
  \]
  where $\tilde{f}(u)=O(1)$ when $u\to 0$.
  Here we used the notation
  \[
    \left(
    \begin{array}{c}
    -2k\\
    j
    \end{array}
    \right)
  =\frac{(-2k)(-2k-1)\cdots(-2k-j+1)}{j!}.
  \]

Then, coming back to variable $t$ we have the following representation for $A(t)$ as sum of simple fractions:
\[
 A(t)=(\pi n)^{-2k} \sum_{j=0}^{2k-1} (\pi n)^{-j}
   \left(
      \begin{array}{c}
      -2k\\
      j
      \end{array}
      \right)
  \left[ \left( t-\frac{\pi n}{2} \right)^{j-2k} +(-1)^j
  \left( t+\frac{\pi n}{2} \right)^{j-2k}\right]+  \hat{f}(t),
  \]
where $\hat{f}(t)=f(t)+\tilde{f}(t)$. Since poles of $A(t)$ with their multiplicities coincide with the poles of the sum from the right hand side, then function $\hat{f}(t)$ has no poles in the whole complex plain and therefore, by Liouville theorem, function $\hat{f}(t)$ must be a constant. Taking into account that $A(\infty)=0$ and the sum from the right hand side also equal zero for $t\to\infty$ we get $\hat{f}(t)=0$.

  Finally, passing from $
 \left(
    \begin{array}{c}
    -2k\\
    j
    \end{array}
    \right)$ to $
     \left(
        \begin{array}{c}
        j+2k-1\\
        2k-1
        \end{array}
     \right)$ we
  arrive to validity of Lemma \ref{lemma1}.
\end{proof}

\begin{lemma}
The integrals $I_{n,k}$ defined by (\ref{eq:integral_ink}) simplify as
  \[
  I_{n,k}=(\pi n)^{-2k}
  \sum_{j=0}^{k-1} 2 (\pi n)^{-2j}
  \left(
          \begin{array}{c}
          2j+2k-1\\
          2k-1
          \end{array}
       \right)
  \int_{-\infty}^\infty
  \frac{\sin^{2k}u}{u^{2(k-j)}}du.
  \]
  \label{lemma2}
\end{lemma}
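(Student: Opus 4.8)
The plan is to feed the partial-fraction identity of Lemma~\ref{lemma1} into (\ref{eq:integral_ink}) and then collapse each resulting summand, by a shift of the integration variable, onto a single one of the ``one-centred'' integrals $\int_{-\infty}^{\infty}\sin^{2k}u\,/\,u^{2(k-j)}\,du$ of the type in (\ref{eq:integral_knj}).

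First I would use $t^{2}-\tfrac{\pi^{2}n^{2}}{4}=\left(t-\tfrac{\pi n}{2}\right)\left(t+\tfrac{\pi n}{2}\right)$ to rewrite
\[
I_{n,k}=\int_{-\infty}^{\infty}\frac{\sin^{2k}\!\left(t-\tfrac{\pi n}{2}\right)}{\left(t-\tfrac{\pi n}{2}\right)^{2k}\left(t+\tfrac{\pi n}{2}\right)^{2k}}\,dt .
\]
Since $n$ is a positive integer, $\sin\!\left(t-\tfrac{\pi n}{2}\right)$ vanishes at \emph{both} $t=\tfrac{\pi n}{2}$ and $t=-\tfrac{\pi n}{2}$, so $\sin^{2k}$ cancels the apparent poles and the integrand is a smooth function that is $O(t^{-4k})$ at infinity; in particular $I_{n,k}$ converges absolutely. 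Applying Lemma~\ref{lemma1} to the rational factor and distributing the sum against $\sin^{2k}\!\left(t-\tfrac{\pi n}{2}\right)$, I would then invoke the identity $\sin\!\left(t+\tfrac{\pi n}{2}\right)=(-1)^{n}\sin\!\left(t-\tfrac{\pi n}{2}\right)$ (valid precisely because $n$ is an integer), so that $\sin^{2k}\!\left(t-\tfrac{\pi n}{2}\right)=\sin^{2k}\!\left(t+\tfrac{\pi n}{2}\right)$. In the summand carrying $\left(t+\tfrac{\pi n}{2}\right)^{j-2k}$ the substitution $u=t+\tfrac{\pi n}{2}$ turns the integral into $\int_{-\infty}^{\infty}\sin^{2k}u\;u^{j-2k}\,du$, and the substitution $u=t-\tfrac{\pi n}{2}$ does the same for the summand carrying $\left(t-\tfrac{\pi n}{2}\right)^{j-2k}$. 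Hence the $j$-th term of Lemma~\ref{lemma1} contributes the factor $1+(-1)^{j}$ to the coefficient of $\int_{-\infty}^{\infty}\sin^{2k}u\,/\,u^{2k-j}\,du$; this annihilates all odd $j$, and for even $j=2m$ with $m=0,1,\dots,k-1$ it gives the factor $2$. Writing $2k-j=2(k-m)$ and relabelling $m\mapsto j$ reproduces exactly the statement of Lemma~\ref{lemma2}.

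The one delicate point — and the step I expect to be the main obstacle — is justifying the term-by-term integration, because the one-centred integral $\int\sin^{2k}u\,/\,u^{2k-j}\,du$ converges absolutely only when $2k-j\ge 2$, i.e.\ for $j\le 2k-2$; the top term $j=2k-1$ would formally produce $\int\sin^{2k}u\,/\,u\,du$, which diverges at $\pm\infty$ since $\sin^{2k}u$ has nonzero mean. I would handle this by performing the decomposition and both substitutions on a finite symmetric interval $[-R,R]$, where every manipulation is elementary, and only afterwards let $R\to\infty$. For each odd $j$ the two pieces then appear as a \emph{difference} of two copies of $\int\sin^{2k}u\,/\,u^{2k-j}\,du$ over intervals that differ merely by the bounded translation $\pm\tfrac{\pi n}{2}$ at each endpoint; that difference equals an integral over two leftover windows of length $\pi n$ situated near $\pm R$, on which $|u|^{-(2k-j)}\le|u|^{-1}=O(R^{-1})$ while $|\sin^{2k}u|\le 1$, so it tends to $0$. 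Thus in the limit only the even-$j$ terms — all absolutely convergent — survive, and they assemble into the asserted identity, completing the reduction of $I_{n,k}$ to the integrals $\int_{-\infty}^{\infty}\sin^{2k}u\,/\,u^{2(k-j)}\,du$.
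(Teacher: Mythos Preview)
Your proof is correct and follows essentially the same route as the paper's --- insert the partial-fraction identity of Lemma~\ref{lemma1}, use the $\pi n$-periodicity of $\sin^{2k}$ so that the two bracketed integrals coincide, and observe that the factor $1+(-1)^{j}$ kills the odd-$j$ terms. In fact you are more careful than the paper: you notice that the top term $j=2k-1$ would formally produce the divergent integral $\int_{-\infty}^{\infty}\sin^{2k}u\,/\,u\,du$ and you justify the cancellation by working on $[-R,R]$ and bounding the leftover windows, whereas the paper simply writes the term-by-term integrals and cancels them without commenting on this convergence issue.
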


\begin{proof}[Proof of Lemma \ref{lemma2}]
We substitute the sum from Lemma \ref{lemma1} in the expression (\ref{eq:integral_ink}) of $I_{n,k}$. Then, it is straightforward to see that
\begin{eqnarray}
I_{n,k}=(\pi n)^{-2k}
  \sum_{j=0}^{2k-1} (\pi n)^{-j}
\left(
\begin{array}{c}
j+2k-1\\
2k-1
\end{array}
\right) 
  \left[
  \int_{-\infty}^{\infty} \sin^{2k}\left(t-\frac{\pi n}{2}\right) 
\left( t+\frac{\pi n}{2} \right)^{j-2k}dt\right.\nonumber\\
  \left.+(-1)^j\int_{-\infty}^{\infty} \sin^{2k}\left(t-\frac{\pi n}{2}\right)  \left( t-\frac{\pi n}{2} \right)^{j-2k}dt\right].
  \label{eq:ink_proof_lemma2}
\end{eqnarray}
Due to the periodicity of the sine function, these two integrals have the same value:
\begin{eqnarray*}
\int_{-\infty}^{\infty} \sin^{2k}\left(t-\frac{\pi n}{2}\right)  \left( t-\frac{\pi n}{2} \right)^{j-2k}dt
=
\int_{-\infty}^{\infty} \sin^{2k}\left(z+\frac{\pi n}{2}\right)  \left( z+\frac{\pi n}{2} \right)^{j-2k}dz\\
=
\int_{-\infty}^{\infty} \sin^{2k}\left(z-\frac{\pi n}{2}\right)  \left( z+\frac{\pi n}{2} \right)^{j-2k}dz,
\end{eqnarray*}
where the change of variable $t = z+\pi n$ has been used in the first equality.

Thus, the terms in (\ref{eq:ink_proof_lemma2}) with odd $j$ vanish, so we can express $I_{n,k}$ as
\[
I_{n,k}=(\pi n)^{-2k}
  \sum_{j=0}^{k-1} (\pi n)^{-2j}
\left(
\begin{array}{c}
2j+2k-1\\
2k-1
\end{array}
\right)
  2
  \int_{-\infty}^{\infty} \sin^{2k}\left(t-\frac{\pi n}{2}\right) 
\left( t-\frac{\pi n}{2} \right)^{2j-2k}dt.
\]
Finally, the change of variable $u=t-\frac{\pi n}{2}$ yields the statement of the lemma.

\end{proof}

\begin{lemma} 
The integrals $K_{n,j}$ defined by (\ref{eq:integral_knj}) are given by
  \[
  \int_{-\infty}^\infty \frac{\sin^{2k} u}{u^{2(k-j)}}du
  =
  \frac{\pi \left( -\frac14 \right)^j}{(2k-2j-1)!}
  \sum_{i=0}^{k-1} (-1)^i
\left(
\begin{array}{c}
2k\\
i
\end{array}
\right)
  (k-i)^{2k-2j-1}.
  \]
  \label{lemma3}
\end{lemma}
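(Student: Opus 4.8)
The plan is to Fourier-expand the numerator and then reduce everything to one classical ``regularized cosine'' integral. First I would write $\sin^{2k}u=\left(\tfrac{e^{iu}-e^{-iu}}{2i}\right)^{2k}$, expand by the binomial theorem, pair the exponential term labelled $m$ with the one labelled $2k-m$ and set $\ell=k-m$; this yields the cosine-polynomial identity
\[
4^{k}\sin^{2k}u=\binom{2k}{k}+2\sum_{\ell=1}^{k}(-1)^{\ell}\binom{2k}{k-\ell}\cos(2\ell u).
\]
Throughout I abbreviate $m:=k-j$, so that $1\le m\le k$ and the denominator of $K_{n,j}$ is $u^{2m}$.

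The second step is a regularization. Since the left-hand side is $4^{k}\sin^{2k}u$, it vanishes to order $2k>2m-1$ at the origin, so its Taylor polynomial of degree $2m-2$ is identically zero. I would subtract from every $\cos(2\ell u)$ its own degree-$(2m-2)$ Taylor polynomial $\sum_{r=0}^{m-1}\tfrac{(-1)^{r}(2\ell u)^{2r}}{(2r)!}$; the residual polynomial part then necessarily cancels, leaving
\[
4^{k}\sin^{2k}u=2\sum_{\ell=1}^{k}(-1)^{\ell}\binom{2k}{k-\ell}\left[\cos(2\ell u)-\sum_{r=0}^{m-1}\frac{(-1)^{r}(2\ell u)^{2r}}{(2r)!}\right].
\]
Each bracket is $O(u^{2m})$ as $u\to 0$ and $O(u^{2m-2})$ as $|u|\to\infty$, so after dividing by $u^{2m}$ it is absolutely integrable over $\mathbb{R}$ and the sum may be integrated term by term.

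It then remains to evaluate, for $a>0$ and $m\ge 1$,
\[
J_{m}(a):=\int_{-\infty}^{\infty}\frac{1}{u^{2m}}\left[\cos(au)-\sum_{r=0}^{m-1}\frac{(-1)^{r}(au)^{2r}}{(2r)!}\right]du,
\]
for which I expect the value $J_{m}(a)=\dfrac{(-1)^{m}\pi\,a^{2m-1}}{(2m-1)!}$. I would prove this by induction on $m$: the base case $m=1$ is $J_{1}(a)=-\int(1-\cos au)\,u^{-2}\,du=-\pi a$, which reduces via $1-\cos au=2\sin^{2}(au/2)$ to the classical value $\int_{-\infty}^{\infty}(\sin t/t)^{2}\,dt=\pi$; for the inductive step I would differentiate twice under the integral sign to get $J_{m}''(a)=-J_{m-1}(a)$, integrate this relation twice, and fix the two integration constants using that $J_{m}$ is even in $a$ and that $J_{m}(a)\to 0$ as $a\to 0$ (dominated convergence). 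Inserting $a=2\ell$ and collecting powers of $2$ would give $K_{n,j}=\dfrac{2^{2m}(-1)^{m}\pi}{4^{k}(2m-1)!}\sum_{\ell=1}^{k}(-1)^{\ell}\binom{2k}{k-\ell}\ell^{\,2m-1}$.

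Finally I would re-index the sum by $i=k-\ell$, so that $\binom{2k}{k-\ell}=\binom{2k}{i}$, $\ell^{2m-1}=(k-i)^{2m-1}$ and $(-1)^{\ell}=(-1)^{k-i}$; then $2^{2m}(-1)^{m+k}/4^{k}=(-1/4)^{\,k-m}=(-1/4)^{\,j}$ and $2m-1=2k-2j-1$, which is precisely the claimed formula. The main obstacle will be the regularization step — checking that the leftover polynomial genuinely vanishes and justifying the term-by-term integration — together with the clean inductive evaluation of $J_{m}(a)$; the rest is routine bookkeeping with binomial coefficients. (Alternatively, since $(\sin u/u)^{2k}$ is the Fourier transform of the $2k$-fold convolution power of $\tfrac12\mathbf 1_{(-1,1)}$, an explicit $B$-spline, $K_{n,j}$ equals $2\pi(-1)^{k-m}$ times the value at the origin of its $(2k-2m)$-th derivative, and expanding that spline produces the same sum.)
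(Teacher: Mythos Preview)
Your proof is correct, and it follows a genuinely different route from the paper's.

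The paper proceeds entirely via Fourier/distribution theory: it writes the integrand as $\bigl(\tfrac{\sin u}{u}\bigr)^{2(k-j)}\sin^{2j}u$, recognises the integral as the value of its Fourier transform at $\omega=0$, and then computes that transform as a $2(k-j)$-fold convolution of Heaviside step functions (giving $\omega_{+}^{2(k-j)-1}/(2(k-j)-1)!$) further convolved $2k$ times with $\delta(\omega+1)-\delta(\omega-1)$; evaluating at the origin produces the stated sum. Your parenthetical $B$-spline remark is exactly this computation in different language.

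Your main argument is more elementary and entirely real-variable: the cosine expansion of $\sin^{2k}u$, a Taylor-polynomial regularization so that each term $R_\ell(u)/u^{2m}$ is individually integrable, and the recursion $J_m''(a)=-J_{m-1}(a)$ with the anchor $J_1(a)=-\pi a$. The regularization step is the heart of it and is sound: since $4^{k}\sin^{2k}u=O(u^{2k})$ and $2m-2\le 2k-2$, the total polynomial contribution must vanish identically, and each remainder is $O(u^{2m})$ at $0$ and $O(u^{2m-2})$ at infinity, so division by $u^{2m}$ yields an absolutely integrable function. The only places that warrant a line of justification are the two differentiations under the integral sign in the induction (a dominating function on compact $a$-intervals of the form $\min\{C_1,\,C_2 u^{-2}\}$ works, using the Lagrange remainder bound near $0$ and the crude polynomial bound at infinity); with that added, the argument is complete.

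What each approach buys: the paper's convolution picture is short and conceptual once one accepts the distributional machinery, and it makes transparent why the answer is a single evaluation of a piecewise polynomial (the $B$-spline) at the origin. Your approach avoids distributions altogether, trades that for the clean ODE-style induction on $J_m$, and has the advantage that every step is a classical Riemann integral manipulation. The final re-indexing $i=k-\ell$ and the identification $(-1)^{m+k}4^{m-k}=(-1/4)^{\,j}$ is exactly right.
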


\begin{proof}[Proof of Lemma \ref{lemma3}]
  We use the Fourier transform
  \[
  \mathcal{F}[f](\omega):=\int_{-\infty}^\infty f(u) e^{-i\omega u}du,
  \]
  to compute integral
  \[
  \int_{-\infty}^\infty f(u)du=\mathcal{F}[f](0),
  \quad f(u):=\left( \frac{\sin u}{u} \right)^{2(k-j)} \sin u^{2j}.
  \]
  We have
  \begin{eqnarray*}
\mathcal{F}\left[ \frac{\sin u}{u} \right](\omega)
  =
  \int_{-\infty}^\infty \frac{e^{iu}-e^{-iu}}{2iu}e^{-i\omega u} du
  =
  \int_{+i\infty}^{-i\infty} \frac{e^{-s}-e^s}{2s} e^{\omega s}
  \frac{ds}{i}\\
  =
  \frac{1}{2\pi i}\int_{-i\infty}^{+i\infty}
  \left( \pi \frac{e^s-e^{-s}}{s} \right)e^{\omega s} ds,
  \end{eqnarray*}
  where $s=-iu$. The last integral is computed using the table of inverse
  Laplace transforms (see \cite{abramowitz_64}, formulas 29.2.2 and 29.3.61):
  \[
  \mathcal{F}\left[ \frac{\sin u}{u} \right](\omega) 
  =
  \pi \left( H(\omega+1) - H(\omega-1) \right),
  \]
  where $H(\omega)$ is the Heaviside function
  \begin{eqnarray*}
  H(\omega-k)=
  \left\{
  \begin{array}{ll}
    1, & \omega\ge k\\
    0, & \omega<k
  \end{array}
  \right..
  \end{eqnarray*}
  Then we have
  \[
  \mathcal{F}[\sin u](\omega) = \frac{1}{i} \frac{d}{d\omega}
  \mathcal{F}\left[ \frac{\sin u}{u} \right](\omega)
  =
  \frac{\pi}{i}\left[ \delta(\omega-1)-\delta(\omega+1) \right],
  \]
  where $\delta(\omega)$ is a Dirac delta function, which is characterized
  by
  \begin{equation}
  F(\omega)*\delta(\omega-a)=\frac{1}{2\pi}F(\omega-a),
  \label{eq:delta_convolution}
  \end{equation}
  where the convolution for Fourier transforms is defined as
  \[
  F(\omega) * G(\omega) =\frac{1}{2\pi}
  \int_{-\infty}^\infty F(\xi)G(\omega-\xi)d\xi,
  \]
  and we have
  \[
  \mathcal{F}[f](\omega) * \mathcal{F}[g](\omega)=\mathcal{F}[fg](\omega).
  \]
  Therefore, since (\ref{eq:delta_convolution})
  \[
  H(\omega+1)-H(\omega-1)=H(\omega)*
  \left[ \delta(\omega+1)-\delta(\omega-1) \right],
  \]
  we obtain
  \begin{eqnarray}
  \mathcal{F}\left[ \left( \frac{\sin u}{u} \right)^{2(k-j)} \sin^{2j} u
    \right](\omega)
    =(-1)^j \pi^{2k} \underbrace{H*H*\cdots *H}_{2(k-j)} \nonumber\\
    \,*\,\underbrace{\left[ \delta(\omega+1)-\delta(\omega-1) \right]*
    \cdots * \left[ \delta(\omega+1)-\delta(\omega-1) \right]}_{2k}.
    \label{eq:fourier_transform}
  \end{eqnarray}
  Computing
  \[
  \underbrace{H*\cdots *H}_{2(k-j)}
  =
  \frac{1}{(2\pi)^{2(k-j)-1}}
  \frac{|\omega_+|^{2(k-j)-1}}{(2(k-j)-1)!},\quad
  |\omega_+|:=\omega H(\omega),
  \]
  we substitute the result in (\ref{eq:fourier_transform}), and using
  (\ref{eq:delta_convolution}) we can obtain an explicit form for
  $\mathcal{F}\left[ \left( \frac{\sin u}{u} \right)^{2(k-j)} \sin^{2j} u
  \right](\omega)$. Finally, taking the obtained expression for $\omega=0$,
  we arrive to assertion of the Lemma.

\end{proof}

\section{R\'enyi measures and Tsallis entropies in momentum space}
\label{sec:calculations}

In this Section we give the values of the entropic moments and the R\'enyi
and Tsallis entropies as well as the associated
R\'enyi lengths of the infinite-well potential in momentum space. Moreover, the uncertainty-like
expressions
which
combines
these momentum quantities with the corresponding ones in position space are also provided for integer orders.

According to Eqs. (\ref{eq:density_momentum}) and
(\ref{eq:entropic_moment_definition}), one has that the momentum entropic
moments are given by
\begin{equation}
  W_k[\gamma_n]=\int_{-\infty}^{+\infty} \left[ \gamma_n(p) \right]^k dp=
  \left( \frac{\pi n^2}{2} \right)^k a^{k-1} I_{n,k},\quad k=1,2,\ldots,
  \label{eq:entropic_moment_momentum}
\end{equation}
where the integral $I_{n,k}$ is given by Theorem \ref{th:mainTheorem}.
Notice that taken into account (\ref{eq:in1_in2}), one has the normalization
$W_1[\gamma_n]=1$ and the averaging momentum density
\[
  W_2[\gamma_n]=\langle \gamma_n\rangle=\frac{a}{3\pi}\left(
  1+\frac{15}{2\pi^2 n^2}
  \right),
\]
for $k=1$ and $2$, respectively. Moreover, taking into account
(\ref{eq:ink_asymptotics}) one has the asymptotical values
\[
  W_k[\gamma_n]\simeq \frac{b_k}{2^k \pi^{k-1}}a^{k-1};\; k\ge 1,\;
  n\to+\infty.
\]

Now, from Eqs. (\ref{eq:density_momentum}),
(\ref{eq:renyi_entropy_definition}) and (\ref{eq:entropic_moment_momentum})
one finds the values for the momentum R\'enyi entropies and lengths
\begin{equation}
  R_k[\gamma_n]=\frac{1}{1-k}\ln W_k[\gamma_n]=
  -\ln a + \frac{1}{1-k}\ln \left( \left( \frac{\pi n^2}{2} \right)^k
  I_{n,k} \right)
  \label{eq:renyi_entropy_momentum}
\end{equation}
and
\begin{equation}
  L_k^R[\gamma_n]=\exp\left( R_k[\gamma_n] \right)=\left[ W_k[\gamma_n]
  \right]^{\frac{1}{1-k}}=\frac{1}{a} \left( \frac{\pi n^2}{2}
  \right)^{\frac{k}{1-k}} \left( I_{n,k} \right)^{\frac{1}{1-k}}
  \label{eq:renyi_length_momentum},
\end{equation}
respectively. Moreover, from Eqs. (\ref{eq:density_momentum}),
(\ref{eq:tsallis_entropy_definition}) and
(\ref{eq:entropic_moment_momentum}) one obtains the values for the momentum
Tsallis entropies
\begin{equation}
T_k[\gamma_n]=\frac{1}{k-1}\left( 1-W_k[\gamma_n] \right)=
\frac{1}{k-1}\left[ 1-\left( \frac{\pi n^2}{2} \right)^k a^{k-1} I_{n,k}
\right].
\label{eq:tsallis_entropy_momentum}
\end{equation}

The values of $I_{n,k}$ given by Theorem \ref{th:mainTheorem} together with Eqs. (\ref{eq:renyi_entropy_momentum})-(\ref{eq:tsallis_entropy_momentum}) provide the momentum R\'enyi and Tsallis quantities, respectively, in a straightforward manner. Moreover, since the R\'enyi and Tsallis entropies in position space are known \cite{sanchezruiz:jpa99,lopezrosa_jmc11} to be
\[
R_k[\rho_n]=\ln a +\frac{1}{1-k}\ln\left(\frac{2\Gamma\left(k+\frac12\right)}{\sqrt{\pi}\Gamma(k+1)}\right)
\]
and
\[
T_k[\rho_n]=\frac{1}{k-1}\left[1-\frac{2}{a^{k-1}\sqrt{\pi}}\frac{\Gamma\left(k+\frac12\right)}{\Gamma(k+1)}\right],
\]
respectively, with $k=1,2,\ldots$, it is possible to easily calculate the position-momentum R\'enyi uncertainty-like sum $R_k[\rho_n]+R_l[\gamma_n]$ and Tsallis uncertainty-like quotient $\left[1+(1-k)T_k[\rho_n]\right]^\frac{1}{2k}\left[1+(1-l)T_l[\gamma_n]\right]^{-\frac{1}{2l}}$.

Finally, let us examine the position-momentum product of the R\'enyi lengths, $L_k^R[\rho_n]\times L_k^R[\gamma_n]$. From (\ref{eq:renyi_length_momentum}) and the position R\'enyi length
\[
L_k^R[\rho_n]=2^{2+\frac{1}{k-1}} a
\left(
\begin{array}{c}
2k\\
k
\end{array}
\right)^{-\frac{1}{k-1}},
\]
one has
\[
L_k^R[\rho_n]\times L_l^R[\gamma_n]=2^{2+\frac{1}{k-1}}
\left(
\begin{array}{c}
2k\\
k
\end{array}
\right)^{-\frac{1}{k-1}} \left[\left(\frac{\pi n^2}{2}\right)^l I_{n,l}\right]^{-\frac{1}{l-1}}.
\]
Now the use of Theorem \ref{th:mainTheorem} in this expression yields immediately the explicit value of this position-momentum uncertainty-like product. For further discussions on the uncertainty measures of the particle-in-a-box system, and its generalization to $D$-dimensions we refer to Ref. \cite{lopezrosa_jmc11}.

\section{Conclusions and open problems}

In this paper we have extended the information-theoretic analysis of the non-relativistic particle-in-a-box system (i.e., a particle moving in an infinite-well potential) \cite{lopezrosa_jmc11,majernik:jpa99,majernik:jpa97,sanchezruiz:pla97,sanchezruiz:jpa99}, by explicitly calculating the R\'enyi and Tsallis entropies of integer order for all ground and excited quantum states in momentum space.
It has required the exact evaluation of the trigonometric functionals $I_{n,k}$ (see Eq. (\ref{eq:integral_ink})) with the Dirichlet-like kernel $\left(\sin\theta_n/\theta_n\right)^{2k}$, so much useful in Fourier analysis.
It is worth pointing out here that the calculation of these quantities for non-integer orders in both position and momentum spaces remains to be an open problem; they are needed to set up the corresponding uncertainty relations
which link position and momentum quantities with conjugated orders
\cite{zozor_pa08,bialynicki_pra06,maassen_prl88,rajagopal_pla95,uffink_thesis}.
Moreover, the evaluation of the momentum Shannon entropy is not yet known except for the lowest and highest excited states of the system. This is because this momentum functional has a trigonometric kernel of logarithmic form (see \cite[Eq. (13)-(14)]{lopezrosa_jmc11}).
Finally let us point out that the generalization to $D$ dimensions and the inclusion of the relativistic effects to the entropies of the infinite-well potential are further interesting open problems.

\section*{Acknowledgments}

AIA is partially supported by the grants Chair Excellence program of Universidad Carlos III Madrid, Spain and Bank Santander, and grant RFBR – 11-01-00245.
DNT acknowledges support to the grant RFBR – 10-01-000682.
JSD and PSM are very grateful for partial support to Junta de Andaluc\'{\i}a
(under grants FQM-4643 and FQM-2445) and Ministerio de Ciencia e Innovaci\'on under project FIS2011-24540. JSD and PSM belong to the Andalusian research group FQM-0207. We appreciate C. Vignat's help in deriving the asymptotics (\ref{eq:ink_asymptotics})-(\ref{eq:bk}).

\end{document}